\documentclass[11pt]{llncs}
\usepackage{amsmath,amssymb}
\addtocounter{MaxMatrixCols}{1} 
\usepackage{graphicx,booktabs,balance,enumitem}
\usepackage{pstricks,pst-node,pst-tree}

\begin{document}
\pagestyle{plain}
\title{On affine variety codes from the Klein quartic}

\author{Olav Geil\inst{1}%
\and Ferruh \"{O}zbudak\inst{1,2}}%
\authorrunning{Geil, \"{O}zbudak}
\tocauthor{Olav Geil, Ferruh \"{O}zbudak}
\institute{Department of Mathematical Sciences, 
    Aalborg University, Denmark\\
    \email{olav@math.aau.dk}\\
        \and
    Department of Mathematics and Institute of Applied Mathematics,
    Middle East Technical University, Turkey\\
    \email{ozbudak@metu.edu.tr}}

\maketitle

\begin{abstract}
We study a family of primary affine variety codes defined from the Klein
quartic. The duals of these codes have previously been  treated in~\cite[Ex.\
3.2]{KFR}. Among the codes that we construct almost all have parameters as
good as the best known codes according to~\cite{grassl} and in the
remaining few cases the parameters are almost as good. To establish the
code parameters we apply the footprint bound~\cite{onorin,geilhoeholdt} from Gr\"{o}bner basis
theory and for this purpose we develop a new method where we inspired by
Buchberger´s algorithm perform a series of symbolic computations. 
\end{abstract}

\section{Introduction}
Affine variety codes~\cite{lax} are codes defined by evaluating multivariate
polynomials at the points of an affine variety. Despite having a
simple description such codes constitute the entire class of linear
codes~\cite[Pro.\ 1]{lax}. Given a description of a code as an
affine variety code it is easy to determine the length $n$ and
dimension $k$, but no simple general method is known which
easily estimates the minimum distance $d$. Of course such methods exists
for particular classes of affine variety codes. For instance the Goppa bound for
one-point algebraic geometric codes extends to an improved bound on the more
general class of order domain codes~\cite{handbook,bookAG}, and in larger generality the
Feng-Rao bounds and their variants can be successfully applied to many
different types of codes\cite{FR24,FR1,FR2,KFR,salazar,bookAG,geil2013improvement}. In this paper we consider a particular
family of primary affine variety codes for which none of the above
mentioned bounds provide accurate information. More precisely we
consider primary codes defined from the Klein quartic using the same 
weighted degree lexicographic ordering as in~\cite[Ex.\ 3.2]{KFR} where
they studied the corresponding dual codes. A common property of the
Feng-Rao bound for primary codes and its variants are that they can be viewed\cite{bookAG,geil2013improvement} as
consequences of the footprint bound~\cite{onorin,geilhoeholdt} from Gr\"{o}bner basis theory. To establish more
accurate information for the codes under consideration it is therefore natural to try to apply the
footprint bound in a more direct way, which is exactly what we do
in the present paper using ingredients from Buchberger's
algorithm and by considering an exhaustive number of special cases. Our analysis reveals that the codes under
consideration are in most cases as good as the best known codes
according to~\cite{grassl} and for the remaining few cases the
minimum distance is only one less than the best known codes of the
same dimension.\\

The paper is organized as follows. In Section~\ref{secaffine} we
introduce the footprint of an ideal and define  
affine variety codes. We then describe how the footprint bound
can be applied to determine the Hamming weight of a code word. Then in
Section~\ref{secnew} we apply symbolic computations leading to
estimates on the minimum distance on each of the considered codes the
information of which we collect in Section~\ref{seccodes}. 
\section{Affine variety codes and the footprint bound}\label{secaffine}
The footprint (also called the delta-set) is defined as follows:
\begin{definition}
Given a field $k$, a monomial ordering $\prec$ and an ideal $J
\subseteq k[X_1, \ldots , X_m]$ the footprint of $J$ is
\begin{eqnarray}
\Delta_{\prec}(J)&=&\{ M \mid M {\mbox{ is a monomial which is not
                     leading monomial }}\nonumber \\
&& {\mbox{ \ \ \ \ \ \ \ \ \ \ \ \ \ \ \  \ \ \ \ \ \ \ \ \ \ \ \ \ \
   \  \ \ \ \ \ \ \ \ \ \ of any polynomial in }} J\}\nonumber
\end{eqnarray}
\end{definition}
From~\cite[Prop.\ 7, Sec.\ 5.3]{clo4} we have the following well-known result.
\begin{theorem}\label{thebasis}
Let the notation be as in the above definition. The set $$\{M+J \mid M
\in \Delta_\prec(J)\}$$ is a basis for $k[X_1, \ldots , X_m]/J$ as a
vector space over $k$.
\end{theorem}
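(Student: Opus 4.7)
The plan is to prove the two separate assertions that together constitute ``basis'': (i) the cosets $\{M+J : M \in \Delta_\prec(J)\}$ span $k[X_1,\ldots,X_m]/J$ over $k$, and (ii) they are $k$-linearly independent. The key device will be a Gr\"{o}bner basis $G = \{g_1,\ldots,g_s\}$ of $J$ with respect to $\prec$; this is the standard tool because, by definition of a Gr\"{o}bner basis, a monomial fails to lie in $\Delta_\prec(J)$ exactly when it is divisible by some $\mathrm{lm}(g_i)$.

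For spanning, I would take an arbitrary $f \in k[X_1,\ldots,X_m]$ and apply the multivariate division algorithm with respect to $G$, obtaining $f = q_1 g_1 + \cdots + q_s g_s + r$ where no monomial appearing in $r$ is divisible by any $\mathrm{lm}(g_i)$. By the Gr\"{o}bner basis property, this forces every monomial of $r$ to lie in $\Delta_\prec(J)$; since $f - r \in J$, the coset $f+J$ equals $r+J$, which is a $k$-linear combination of the claimed spanning set.

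For linear independence, I would argue by contradiction: suppose pairwise distinct $M_1,\ldots,M_t \in \Delta_\prec(J)$ and nonzero scalars $c_1,\ldots,c_t \in k$ satisfy $\sum c_i M_i \in J$. Then $p := \sum c_i M_i$ is a nonzero element of $J$, so $p$ has a well-defined leading monomial $\mathrm{lm}(p) = M_{j}$ for some $j$. But $M_j \in \Delta_\prec(J)$ means by definition that $M_j$ is not the leading monomial of any polynomial in $J$, a contradiction.

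Neither step is conceptually hard since both are standard Gr\"{o}bner basis manipulations; the one point that genuinely needs care is the spanning argument, because it implicitly relies on the equivalence ``$M \notin \Delta_\prec(J) \iff \mathrm{lm}(g_i)\mid M$ for some $i$,'' which is precisely the defining property of a Gr\"{o}bner basis and must not be confused with the weaker statement obtained from an arbitrary generating set of $J$. Once that equivalence is invoked correctly, the rest is routine, which is why the excerpt simply cites~\cite[Prop.~7, Sec.~5.3]{clo4}.
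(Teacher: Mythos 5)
Your proof is correct and is precisely the standard argument that the paper outsources to~\cite[Prop.~7, Sec.~5.3]{clo4}: the division algorithm with respect to a Gr\"obner basis gives spanning, and the observation that a nonzero element of $J$ supported on $\Delta_\prec(J)$ would have a forbidden leading monomial gives linear independence. Since the paper states the result without proof and merely cites the reference, there is nothing further to compare; your write-up fills in exactly what that citation supplies.
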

Recall that by definition a Gr\"{o}bner basis is a finite basis for the ideal
$J$ from which one can easily determine the footprint. Concretely a
monomial is a leading monomial of some polynomial in the ideal if and
only if it is divisible by a leading monomial of some polynomial in
the Gr\"{o}bner basis. The following corollary is an instance of the
more general footprint bound~\cite{onorin}.
\begin{corollary}\label{corkor}
Let $I \subseteq {\mathbb{F}}_q[X_1, \ldots , X_m]$ be an ideal and $I_q=I+\langle
X_1^q-X_1, \ldots , X_m^q-X_m  \rangle$. The variety of $I_q$ is of
size $\# \Delta_\prec(I_q)$ for any monomial ordering $\prec$.
\end{corollary}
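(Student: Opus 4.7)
The plan is to show that $\#\Delta_\prec(I_q)$ and $\#V(I_q)$ both equal the $\mathbb{F}_q$-vector space dimension of the quotient ring $R := \mathbb{F}_q[X_1,\ldots,X_m]/I_q$, thereby establishing the equality of the two cardinalities.

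The first equality is essentially a direct appeal to Theorem~\ref{thebasis}: since the footprint monomials descend to an $\mathbb{F}_q$-basis of $R$, we have $\dim_{\mathbb{F}_q} R = \#\Delta_\prec(I_q)$. No work is required beyond citing the theorem.

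For the second equality, I would set up the evaluation map $\mathrm{ev}\colon R \to \mathbb{F}_q^{V(I_q)}$ sending $f + I_q$ to the tuple $(f(P))_{P \in V(I_q)}$. Because $X_i^q - X_i \in I_q$ for each $i$, every point of $V(I_q)$ has coordinates in $\mathbb{F}_q$, so the variety is a finite subset of $\mathbb{F}_q^m$ and $\mathrm{ev}$ is well defined. Surjectivity follows from Lagrange interpolation at the points of the finite set $V(I_q)\subseteq \mathbb{F}_q^m$, using products of the univariate indicator factors $\prod_{a \in \mathbb{F}_q \setminus \{a_0\}} (X_i - a)/(a_0 - a)$ to isolate a single point.

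The main obstacle is injectivity of $\mathrm{ev}$, which amounts to the radicality statement $I_q = I(V(I_q))$. I would obtain this from the factorization $X_i^q - X_i = \prod_{a \in \mathbb{F}_q}(X_i - a)$ into distinct linear factors: the sub-ideal $\langle X_1^q - X_1, \ldots, X_m^q - X_m\rangle$ is the vanishing ideal of $\mathbb{F}_q^m$, so its quotient is isomorphic via evaluation to $\mathbb{F}_q^{q^m}$, a reduced ring. Any ideal containing these generators, in particular $I_q$, then corresponds under the correspondence with subsets of $\mathbb{F}_q^m$ to the vanishing ideal of $V(I_q)$ and is therefore radical. This forces $\mathrm{ev}$ to be injective, giving $\#V(I_q) = \dim_{\mathbb{F}_q} R = \#\Delta_\prec(I_q)$ and completing the proof.
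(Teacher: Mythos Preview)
Your proof is correct and follows essentially the same route as the paper: set up the evaluation map from $\mathbb{F}_q[X_1,\ldots,X_m]/I_q$ to $\mathbb{F}_q^{V(I_q)}$, obtain surjectivity by Lagrange interpolation, obtain injectivity from the radicality of $I_q$, and then invoke Theorem~\ref{thebasis}. The only cosmetic difference is in the justification of radicality---the paper appeals directly to the fact that $I_q$ contains a square-free univariate polynomial in each variable over a perfect field, whereas you pass through the observation that $\mathbb{F}_q[X_1,\ldots,X_m]/\langle X_i^q-X_i\rangle\cong\mathbb{F}_q^{q^m}$ is reduced---but these are two phrasings of the same underlying fact.
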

\begin{proof}
Let the variety of $I_q$ be $\{P_1, \ldots , P_n\}$ with $P_i \neq
P_j$ for $i \neq j$. The field ${\mathbb{F}}_q$ being perfect, the ideal $I_q$ is radical because it contains a
univariate square-free polynomial in each variable and by the
ideal-variety correspondence therefore $I_q$ is in fact the vanishing
ideal of $\{P_1, \ldots , P_n\}$. Therefore the evaluation map
${\mbox{ev}}: {\mathbb{F}}_q[X_1, \ldots , X_m]/I_q \rightarrow
{\mathbb{F}}_q^n$ given by ${\mbox{ev}}(F+I_q)=(F(P_1), \ldots ,
F(P_n))$ is injective. On the other hand the evaluation map is also
surjective which is seen by applying Lagrange interpolation. We have
demonstrated that ${\mbox{ev}}$ is a bijection and the corollary
follows from Theorem~\ref{thebasis}. \qed
\end{proof}
We are
now ready to define primary affine variety codes formally.

\begin{definition} Let the notation be as in the proof of
  Corollary~\ref{corkor}. Given an ideal $I \subseteq {\mathbb{F}}_q[X_1, \ldots , X_m]$ and a
monomial ordering $\prec$ choose $L \subseteq
\Delta_{\prec}(I_q)$. Then $$C(I,L)={\mbox{Span}}_{\mathbb{F}_q}\{{\mbox{ev}}(M+I_q)
\mid M \in L\}$$
is called a primary affine variety code.
\end{definition}
From the above discussion it is clear that $C(I,L)$ is a code of
length $n=\# \Delta_\prec(I_q)$ and dimension $k=\# L$. Given a code word $\vec{c}={\mbox{ev}}(F+I_q)$ then by
Corollary~\ref{corkor} we have 
\begin{eqnarray}
w_H(\vec{c})&=&n- \# \Delta_{\prec_w}(\langle F \rangle +I_q)= \#
                \Delta_{\prec_w}(I_q) \cap {\mbox{lm}}(\langle
                F\rangle +I_q)= \# \Box_{\prec_w}(F),\nonumber
\end{eqnarray}
where $\Box_{\prec_w}(F):=\Delta_{\prec_w}(I_q) \cap
{\mbox{lm}}(\langle F\rangle +I_q)$. 
Reducing a polynomial modulo a Gr\"{o}bner basis for $I_q$ one
obtains a (unique) polynomial which has support in the footprint
$\Delta(I_q)$ (this is the result behind Theorem~\ref{thebasis}). Hence we shall always assume that $F$ is of this
form. In the rest of the paper we concentrate on estimating $\#
\Box_{\prec}(F)$ using only information on the leading monomial. We do
this for a concrete class of codes defined from the Klein quartic, but
the method that we describe can be applied to any affine variety code
of moderate dimension. In particular it can be applied whenever the
length of the codes are moderate.
\section{Code words from the Klein curve}\label{secnew}
In the remaining part of the paper $I$ will always be the ideal 
$$I=\langle Y^3+X^3Y+X\rangle \subseteq {\mathbb{F}}_8[X,Y]$$
and consequently $I_8=\langle Y^3+X^3Y+X,X^8+X,Y^8+Y\rangle$. The
corresponding variety\footnote{As we treat 
codes at a theoretical level we shall not need detailed information on the variety, but we find it interesting to note
that besides one point being $(0,0)$ the remaining points correspond
to the Fano plane by identifying each non-zero element in
${\mathbb{F}}_8$ with a vertex. Every non-zero $a$ now defines a line
consisting of all $b$s such that $(a,b)$ is in the variety.} is of size $22$, hence we write it as $\{P_1,
\ldots , P_{22}\}$. The evaluation map then becomes
${\mbox{ev}}(F+I_8)=(F(P_1), \ldots , F(P_{22}))$.\\

As monomial ordering we choose the same ordering as in~\cite[Ex.\
3.2]{KFR}, namely the weighted degree
lexicographic ordering $\prec_w$ given by the rule that  $X^\alpha
Y^\beta \prec_w X^\gamma Y^\delta$ if either ({\it{i}}) or ({\it{ii}}) below holds
$$ {\mbox{ ({\it{i}}) }} 2 \alpha + 3 \beta < 2 \gamma + 3 \delta, {\mbox{ \ \
    ({\it{ii}}) }}2 \alpha + 3 \beta = 2 \gamma + 3 \delta {\mbox{ but }} \beta < \delta.$$
By inspection $\{Y^3+X^3Y+X,X^8-X,X^7Y+Y\}$ is a Gr\"{o}bner basis for
$I_8$ with respect to $\prec_w$. Hence, the footprint $\Delta_{\prec_w}(I_8)$ and the
corresponding weights are as in
Figure~\ref{figfirstone}. We remind the reader that for $L \subseteq
\Delta_{\prec_w}(I_8)$ the code 
$C(I,L)$ equals ${\mbox{ev}}({\mbox{Span}}_{{\mathbb{F}}_8}(L)+I_8)$
which is of length $n=22$ and 
dimension $k=\#L$. 
\begin{figure}
\begin{center}
$$
\begin{array}{cccccccc}
Y^2&XY^2&X^2Y^2&X^3Y^2&X^4Y^2&X^5Y^2&X^6Y^2\\
Y&XY&X^2Y&X^3Y&X^4Y&X^5Y&X^6Y\\
1&X&X^2&X^3&X^4&X^5&X^6&X^7\\
\ \\
6&8&10&12&14&16&18\\
3&5&7&9&11&13&15\\
0&2&4&6&8&10&12&14
\end{array}
$$
\end{center}
\caption{The footprint $\Delta_{\prec_w}(I_8)$ with corresponding
  weights.}
\label{figfirstone}
\end{figure}

Our
method to estimate $\# \Box_{\prec_w}(F)$ (which corresponds to
estimating the Hamming weight of the corresponding code word)
consists in two parts. First we
observe that all monomials in $\Delta_{\prec_w}(I_8)$ divisible by the
leading monomial of $F$ are in $\Box_{\prec_w}(F)$. In the second part we then for a number of exhaustive
special cases find more monomials in
$\Box_{\prec_w}(F)$ by establishing clever combinations of polynomials that
we already know are in $\langle F \rangle +I_q$. To describe how such combinations are derived
we will need the following notation. Consider polynomials $S(X,Y)$, $D(X,Y)$ and $R(X,Y)$. By 
\begin{eqnarray}
S(X,Y) \overset{D(X,Y)}{\longrightarrow}
 R(X,Y)\label{eqreduce}
\end{eqnarray}
we shall indicate that $R(X,Y)=S(X,Y)-Q(X,Y) D(X,Y)$ for some
polynomial $Q(X,Y)$. The important fact -- which we shall use
frequently throughout the paper -- is that $R(X,Y) \in \langle
S(X,Y),D(X,Y)\rangle$. Observe that although we will always use the
above ``operation'' to decrease the leading monomial (meaning that
${\mbox{lm}}(R)\prec {\mbox{lm}}(S)$ ), we may still have monomials
left in the support of $R(X,Y)$ which are divisible by the leading
monomial of $D(X,Y)$. Hence, (\ref{eqreduce}) does not necessarily
correspond to the usual (full) division as described
in~\cite[Sec. 2.3]{clo4}.

\begin{remark}\label{remrom}
The Feng-Rao bound can be applied to any affine variety code; but it
works most efficiently when the ideal $I$ and the monomial ordering $\prec$ under
consideration satisfy
the order domain conditions \cite[Sec.\ 7]{bookAG}. That is, 
\begin{enumerate}
\item The ordering $\prec$
must be a weighted degree lexicographic ordering (or in larger
generality a generalized weighted 
degree ordering~\cite[Def.\ 8]{bookAG}).
\item  A Gr\"{o}bner
basis for $I$ must exist with the property that any polynomial in it
contains in its support (exactly)
two monomials of the highest weight. 
\item No two different
monomials in $\Delta_\prec(I)$ are of the same weight. 
\end{enumerate}
In such cases the method often establishes many more monomials in
$\Box_{\prec}(F)$ than those divisible by the leading monomial of
$F$. In~\cite{geil2013improvement} an improved Feng-Rao bound was
presented which treats in addition efficiently certain
families of cases where the conditions 1.\ and 2.\ are satisfied, but
3. is not. Even though the ideal and monomial ordering studied in the present
section exactly satisfy conditions 1.\ and 2., but not 3, the improved Feng-Rao
bound produces the same information as the Feng-Rao
bound in this case. By inspection both methods only ``detect'' monomials divisible by the leading monomial of
$F$ as being members of $\Box_{\prec_w}(F)$.
\end{remark}

Below we treat the $22$ different possible
leading monomials -- corresponding to the different members of
$\Delta_{\prec_w}(I_8)$ -- one by one. For simplicity, we shall in our
calculations always assume that the leading coefficient of $F$ is $1$
which is not really a restriction as our goal is to estimate Hamming weights.

\subsection{Leading monomial equal to  $Y$}

Consider $\vec{c}={\mbox{ev}}(F+I_8)$ where
$F(X,Y)=Y+a_1X+a_2$. Clearly 
$$\{ Y, Y^2, XY, XY^2, \ldots ,X^6Y,X^6Y^2\} \subset
\Box_{\prec_w}(F).$$
We next establish more monomials in $\Box_{\prec_w}(F)$ under
different conditions on the coefficients $a_1, a_2$. Consider 
\begin{eqnarray}
&&Y^2F(X,Y) \nonumber\\
&\overset{Y^3+X^3Y+X}{\longrightarrow} &X^3Y+a_1XY^2+a_2Y^2+X\nonumber
  \\
&\overset{F(X,Y)}{\longrightarrow}&a_1X^4+(a_1^3+a_2)X^3+a_1^2a_2X^2+(a_1
                                    a_2^2+1)X+a_2^3.\nonumber
\end{eqnarray}
If $a_1 \neq 0$ then we have $$\{X^4,X^5,X^6,X^7\} \subset \Box_{\prec_w}(F).$$ Next assume $a_1=0$. If $a_2\neq 0$ then we obtain
$$\{X^3,X^4,X^5,X^6,X^7\} \subset \Box_{\prec_w}(F).$$
Finally, assume $a_1=a_2=0$ in which case we have
$$\{X,X^2,X^3X^4,X^5,X^6,X^7\} \subset
\Box_{\prec_w}(F).$$
In conclusion we have shown that $\Box_{\prec_w}(F)$ contains at least
$14+4=18$ elements which implies
$w_H(\vec{c})\geq 18$.

\subsection{Leading monomial equal to $Y^2$}

Consider a codeword $\vec{c}={\mbox{ev}}(F+I_8)$ where 
$$F(X,Y)=Y^2+a_1X^3+a_2XY+a_3X^2+a_4Y+a_5X+a_6.$$
Independently of the coefficients $a_1, \ldots , a_6$ we see that 
\begin{eqnarray}
\{Y^2, XY^2, \ldots , X^6Y^2\} \subset \Box_{\prec_w}(F).\label{eqtrekant1} 
\end{eqnarray}
We next consider an exhaustive series of conditions under which we establish more
monomials in $\Box_{\prec_w}(F)$. We have
\begin{eqnarray}
&&YF(X,Y)\nonumber\\
&\overset{Y^3+X^3Y+X}{\longrightarrow}&(a_1+1)X^3Y+a_2XY^2+a_3X^2Y
                                        \nonumber \\
&&+a_4Y^2+a_5XY+a_6Y+X. \label{eqenting1}
\end{eqnarray}
If $a_1 \neq 1$ then the leading monomial of the last polynomial
becomes $X^3Y$ and consequently
\begin{eqnarray}
\{X^3Y,X^4Y,X^5Y,X^6Y \}\in \Box_{\prec_w}(F). \label{eqtrekant2}
\end{eqnarray}
Continuing the calculations for this case we obtain:
\begin{eqnarray}
&&Y((a_1+1)X^3Y+a_2XY^2+a_3X^2Y+a_4Y^2+a_5XY+a_6Y+X)\nonumber \\
&\overset{F(X,Y)}{\longrightarrow}&(a_1+1)(a_1X^6+a_2X^4Y+a_3X^5+a_4X^3Y+a_5X^4+a_6X^3)\nonumber
  \\
&&+a_2XY^3+a_3X^2Y^2+a_4Y^3+a_5XY^2+a_6Y^2+XY.\nonumber
\end{eqnarray}
If $a_1 \neq 0$ then we also have 
$$\{X^6,X^7\} \subset \Box_{\prec_w}(F).$$
Assuming next that $a_1=0$ the above expression
becomes
\begin{eqnarray}
&&a_2X^4Y+a_3X^5+a_4X^3Y+a_5X^4+a_6X^3+a_2XY^3\nonumber \\
&&+a_3X^2Y^2+a_4Y^3+a_5XY^2+a_6Y^2+XY\nonumber \\
&\overset{Y^3+X^3Y+X}{\longrightarrow}&a_3X^5+a_4X^3Y+a_5X^4+a_6X^3+a_3X^2Y^2+a_4Y^3\nonumber
  \\
&&+a_5XY^2+a_6Y^2+XY+a_2X^2\nonumber \\
&\overset{F(X,Y)}{\longrightarrow}&a_3X^5+a_5X^4+a_6X^3+a_3a_2X^3Y+a_3^2X^4+a_3a_4X^2Y
\nonumber
  \\
&&+a_3a_5X^3+a_3a_6X^2+a_5XY^2+a_6Y^2+XY+a_2X^2.\nonumber
\end{eqnarray}
If $a_3\neq 0$ then
$$\{X^5,X^6,X^7\} \subset \Box_{\prec_w}(F).$$
Hence, continuing under the assumption $a_3=0$ we are left with
\begin{eqnarray}
&&a_5X^4+a_6X^3+a_5XY^2+a_6Y^2+XY+a_2X^2 \nonumber \\
&\overset{F(X,Y)}{\longrightarrow}&a_5X^4+a_6X^3+a_5a_2X^2Y+a_5a_4XY+a_5^2X^2+a_5a_6X+a_6Y^2\nonumber
                                    \\
&&+XY+a_2X^2.\nonumber
\end{eqnarray}
if $a_5 \neq 0$ then 
$$\{X^4,X^5,X^6,X^7\} \subset \Box_{\prec_w}(F).$$
Hence, assume $a_5 =0$ and we are left with
\begin{eqnarray}
&&a_6X^3+a_6Y^2+XY+a_2X^2\nonumber \\
&\overset{F(X,Y)}{\longrightarrow}&a_6X^3+a_6a_2XY+a_6a_4Y+a_6^2+XY+a_2X^2.\nonumber 
\end{eqnarray}
If $a_6\neq 0$ then 
$$\{X^3,X^4,X^5, X^6, X^7\}\subset \Box_{\prec_w}(F).$$
If on the other hand $a_6=0$ then we are left with $XY+a_2X^2$ in
which case we obtain
$$\{XY,X^2Y\} \subset \Box_{\prec_w}(F).$$
In conclusion, for the case $a_1 \neq 1$ we obtained in addition to
the elements in~(\ref{eqtrekant1}) the elements in~(\ref{eqtrekant2})
and at least $2$ more. That is, in addition to the elements
in~(\ref{eqtrekant1}) at least $6$ more.\\
Assume in the following that $a_1=1$ and continue the reduction from~(\ref{eqenting1}) 
\begin{eqnarray}
&\overset{F(X,Y)}{\longrightarrow}&a_2X^4+(a_3+a_2^2)X^2Y+(a_2a_3+a_4)X^3+a_5XY
                                    \nonumber
    \\
&&+(a_2a_5+a_3a_4)X^2+
(a_6+a_4^2)Y+(1+a_4a_5)X+a_4a_6. \label{eqcontinuingfrom1}
\end{eqnarray}
If $a_2 \neq 0$ then 
\begin{eqnarray}
\{X^4,X^5,X^7,X^4Y,X^5Y,X^6Y\} \subset \Box_{\prec_w}(F). \nonumber
\end{eqnarray}
Next assume $a_2=0$. If $a_3\neq 0$ then 
\begin{eqnarray}
\{X^2Y,X^3Y,X^4Y,X^5Y, X^6Y\} \subset
  \Box_{\prec_w}(F). \label{eqtrekant3}
\end{eqnarray}
Continuing the reduction under the assumption $a_3\neq 0$ we
multiply~(\ref{eqcontinuingfrom1}) by $Y$ and continue the reduction:
\begin{eqnarray}
&&a_3X^2Y^2+a_4X^3Y+a_5XY^2+a_3a_4X^2Y+(a_6+a_4^2)Y\nonumber \\
&&+(1+a_4a_5)X+a_4a_6\nonumber
  \\
&\overset{F(X,Y)}{\longrightarrow}&a_3X^5+a_3^2X^4+a_3a_4X^2Y+a_3a_5X^3+a_3a_6X^2+a_4X^3Y+a_5XY^2\nonumber
  \\
&&+a_3a_4X^2Y+(a_6+a_4^2)Y+(1+a_4a_5)X+a_4a_6.\nonumber
\end{eqnarray}
As $a_3 \neq 0$ we obtain in addition to 
(\ref{eqtrekant1}) and (\ref{eqtrekant3}) that 
$$\{ X^5,X^6,X^7\} \subset \Box_{\prec_w}(F).$$ 
That is, in addition to (\ref{eqtrekant1}) we found in total  $8$
  more elements in $\Box_{\prec_w}(F)$.

Next assume $a_3=0$ and continue from~(\ref{eqcontinuingfrom1}). If $a_4 \neq 0$ then
\begin{eqnarray}
\{X^3,X^4,X^5,X^6,X^7,X^3Y,X^4Y, X^5Y, X^6Y\} \subset \Box_{\prec_w}(F).\nonumber 
\end{eqnarray}
Next assume $a_4=0$. if $a_5 \neq 0$ then 
\begin{eqnarray}
\{XY,X^2Y, X^3Y,X^4Y,X^5Y, X^6Y\} \subset \Box_{\prec_w}(F). \nonumber
\end{eqnarray}
Hence, assume $a_5=0$. If $a_6 \neq 0$ then 
\begin{eqnarray}
\{Y,XY,X^2Y,X^3Y,X^4Y,X^5Y, X^6Y\} \subset \Box_{\prec_w}(F). \nonumber
\end{eqnarray}
Finally, assume $a_6=0$. But then
\begin{eqnarray}
\{X,X^2,X^3,X^4,X^5,X^6,X^7,
XY,X^2Y,X^3Y,X^4Y,X^5Y, X^6Y\} \subset \Box_{\prec_w}(F).\nonumber
\end{eqnarray}
In conclusion, we have at least $7+\min\{ 6,6,8,9,6,7,13\}=13$
monomials in $\Box_{\prec_w}(F)$ and therefore $w_H(\vec{c}) \geq 13$. 

\subsection{Leading monomial equal to $XY$}

Consider $\vec{c}={\mbox{ev}}(F+I_8)$ where 
$$F(X,Y)=XY+a_1X^2+a_2Y+a_3X+a_4.$$
For sure
\begin{eqnarray}
\{ XY,X^2Y,X^3Y,X^4Y,X^5Y,X^6Y,{\mbox{ \ \ \ \ \ \ \ \ \ \ \ }}\nonumber \\
XY^2,X^2Y^2,X^3Y^2,X^4Y^2,X^5Y^2,X^6Y^2\}& \subset \Box_{\prec_w}(F). \label{eqsnipsnapsnude}
\end{eqnarray}
We next consider an exhaustive series of conditions under which we
establish more monomials in $\Box_{prec_w}(F)$. We have
\begin{eqnarray}
&&Y^2F(X,Y) \nonumber \\
&\overset{Y^3+X^3Y+X}{\longrightarrow}&a_1X^5+a_3X^4+a_4X^3\nonumber \\
&&+a_1(a_1^2X^4+a_2^2Y^2+a_3^2X^2+a_4^2) \nonumber \\
&&+a_3XY^2+a_4Y^2+X^2+a_2X.\nonumber
\end{eqnarray}
If $a_1 \neq 0$ then 
$$\{X^5,X^6,X^7\} \subset \Box_{\prec_w}(F).$$
Hence, assume $a_1=0$ and continue the reduction:
\begin{eqnarray}
&\overset{F(X,Y)}{\longrightarrow}&a_3a_2Y^2+a_3^2XY+a_3a_4Y+a_3X^4+a_4X^3+a_4Y^2+X^2+a_2X. \nonumber
\end{eqnarray}
If $a_3 \neq 0$ then 
$$\{X^4,X^5,X^6,X^7\} \subset \Box_{\prec_w}(F).$$
Hence, assume $a_3=0$ in which case the above becomes
$$a_4Y^2+a_4X^3+X^2+a_2X.$$
If $a_4=0$ then
$$\{X^2,X^3,X^4,X^5,X^6,X^7\} \subset \Box_{\prec_w}(F).$$
Hence, assume $a_4\neq 0$, in which case we have
$$\{Y^2\} \subset \Box_{\prec_w}(F).$$
We continue the calculations to add more elements. We have:
\begin{eqnarray}
X^2(a_4Y^2+a_4X^3+X^2+a_2X) \overset{F(X,Y)}{\longrightarrow}a_4X^5+a_2^2Y^2+a_2X+a_4^2.\nonumber
\end{eqnarray}
But then 
$$\{X^5,X^6,X^7\} \subset \Box_{\prec_w}(F).$$
That is, for the case $a_4 \neq 0$ $\Box_{\prec_w}(F)$ contains in addition
to~(\ref{eqsnipsnapsnude}) at least $1+3=4$ more monomials.\\

In conclusion $w_H(\vec{c})\geq 12+\min \{3,4,6,4\}=15$,
and if $a_1=0$ then $w_H(\vec{c}) \geq 16$.

\subsection{Leading monomial equal to $X^2Y$}
Consider $\vec{c}={\mbox{ev}}(F+I_8)$ where 
\begin{eqnarray}
F(X,Y)&=&X^2Y+a_1Y^2+a_2X^3+a_3XY+a_4X^2+a_5Y+a_6X+a_7.\nonumber
\end{eqnarray}
For sure
\begin{eqnarray}
\{
  X^2Y,X^3Y,X^4Y,X^5Y,X^6Y,X^2Y^2,X^3Y^2,X^4Y^2,X^5Y^2,X^6Y^2\}\nonumber
  \\
\subset \Box_{\prec_w}(F).\label{eqsnabeleins6} 
\end{eqnarray}
We next consider an exhaustive series of conditions under which we establish more
monomials in $\Box_{\prec_w}(F)$. We have
\begin{eqnarray}
&&Y^2F(X,Y) \nonumber \\
&=&X^2Y^3+a_1Y^4+a_2X^3Y^2+a_3XY^3+a_4X^2Y^2+a_5Y^3\nonumber \\
&&+a_6XY^2+a_7Y^2\nonumber
  \\
&\overset{Y^3+X^3Y+X}{\longrightarrow}&X^5Y+a_1X^3Y^2+a_2X^3Y^2+a_3X^4Y+a_4X^2Y^2+a_5X^3Y\nonumber
                                        \\
&&+a_6XY^2+a_7Y^2+X^3+a_1XY+a_3X^2+a_5X\nonumber \\
&\overset{F(X,Y)}{\longrightarrow}&a_2X^6+a_4X^5+a_6X^4+a_7X^3+a_2X^3Y^2+a_4X^2Y^2\nonumber
                                    \\
&&+a_6XY^2+a_7Y^2+X^3+a_1XY+a_3X^2+a_5X.\nonumber
\end{eqnarray}
If $a_2\neq 0$ then 
$$\{X^6, X^7\} \subset \Box_{\prec_w}(F).$$
Hence, assume $a_2=0$, in which case we have
\begin{eqnarray}
&&a_4X^5+a_6X^4+a_7X^3+a_4X^2Y^2+a_6XY^2+a_7Y^2+X^3\nonumber \\
&&+a_1XY+a_3X^2+a_5X\nonumber
  \\
&\overset{F(X,Y)}{\longrightarrow}&a_4X^5+a_6X^4+a_7X^3+a_4Y(a_1Y^2+a_3XY+a_4X^2\nonumber
  \\
&&+a_5Y+a_6X+a_7)+a_6XY^2+a_7Y^2+X^3+a_1XY+a_3X^2+a_5X.\nonumber
\end{eqnarray} 
If $a_4 \neq 0$ then
$$\{X^5,X^6,X^7\} \subset \Box_{\prec_w}(F).$$
Hence, assume $a_4=0$ and continuer
\begin{eqnarray}
&&X(a_6X^4+a_7X^3+a_6XY^2+a_7Y^2+X^3+a_1XY+a_3X^2+a_5X)\nonumber \\
&\overset{F(X,Y)}{\longrightarrow}&a_6X^5+a_7X^4+a_6Y(a_1Y^2+a_3XY+a_5Y+a_6X+a_7)\nonumber
  \\
&&+a_7XY^2+X^4+a_1X^2Y+a_3X^3+a_5X^2.\nonumber
\end{eqnarray}
If $a_6 \neq 0$ then
$$\{X^5,X^6,X^7\} \subset \Box_{\prec_w}(F).$$
Hence, assume $a_6=0$, in which case we have
\begin{eqnarray}
&&X(a_7X^4+a_7XY^2+X^4+a_1X^2Y+a_3X^3+a_5X^2)\nonumber \\
&\overset{F(X,Y)}{\longrightarrow}&(a_7+1)X^5+a_7Y(a_1Y^2+a_3XY+a_5Y+a_7)\nonumber
  \\
&&+a_1X^3Y+a_3X^4+a_5X^3.\nonumber 
\end{eqnarray}
If $a_7 \neq 1$ then
$$\{X^5,X^6,X^7\} \subset \Box_{\prec_w}(F).$$
Hence, assume $a_7=1$ and continue the reduction
\begin{eqnarray}
&&a_1Y^3+a_3XY^2+a_5Y^2+Y+a_1X^3Y+a_3X^4+a_5X^3\nonumber \\
&\overset{Y^3+X^3Y+X}{\longrightarrow}&a_3XY^2+a_5Y^2+Y+a_3X^4+a_5X^3+a_1X\nonumber
\end{eqnarray}
which we multiply by $X$ before continuing reduction
\begin{eqnarray}
&&a_3X^2Y^2+a_5XY^2+XY+a_3X^5+a_5X^4+a_1X^2\nonumber \\
&\overset{F(X,Y)}{\longrightarrow}&a_3Y(a_1Y^2+a_3XY+a_5Y+1)\nonumber
  \\
&&+a_5XY^2+XY+a_3X^5+a_5X^4+a_1X^2.\nonumber
\end{eqnarray}
If $a_3 \neq 0$ then 
$$\{X^5,X^6,X^7\} \subset \Box_{\prec_w}(F).$$
Hence, assume $a_3=0$ and continue
\begin{eqnarray}
&&X(a_5XY^2+XY+a_5X^4+a_1X^2)\nonumber \\
&\overset{F(X,Y)}{\longrightarrow}&a_5(a_1Y^3+a_5Y^2+Y)+a_1Y^2+a_5Y+1+a_5X^5+a_1X^3.\nonumber
\end{eqnarray}
If $a_5\neq 0$ then 
$$\{X^5,X^6,X^7\} \subset \Box_{\prec_w}(F).$$
Hence, assume $a_5=0$ and multiply the resulting expression by $Y$
\begin{eqnarray}
&&Y(a_1Y^2+a_1X^3+1)\nonumber \\
&\overset{Y^3+X^3Y+X}{\longrightarrow}&Y+a_1X\nonumber
\end{eqnarray}
and we conclude
$$\{Y,Y^2,XY,XY^2\} \subset \Box_{\prec_w}(F).$$
In conclusion $w_H(\vec{c})\geq 10+\min \{2,3,3,3,3,3,4\}=12$, 
and if $a_2 =0$ then $w_H(\vec{c})\geq 13$.

\subsection{Leading monomial equal to $XY^2$}
Consider $\vec{c}={\mbox{ev}}(F+I_8)$ where 
\begin{eqnarray}
F(X,Y)&=&XY^2+a_1X^4+a_2X^2Y+a_3Y^2+a_4X^3\nonumber \\
&&+a_5XY+a_6X^2+a_7Y+a_8X+a_9.\nonumber
\end{eqnarray}
For sure
\begin{eqnarray}
&&\{ XY^2,X^2Y^2,X^3Y^2,X^4Y^2,X^5Y^2,X^6Y^2\} \subset \Box_{\prec_w}(F).\label{eqsnabeleins} 
\end{eqnarray}
We next consider an exhaustive series of conditions under which we establish more
monomials in $\Box_{\prec_w}(F)$. We have
\begin{eqnarray}
&&YF(X,Y) \nonumber \\
&\overset{Y^3+X^3Y+X}{\longrightarrow}
  &(a_1+1)X^4Y+a_2X^2Y^2+a_3Y^3+a_4X^3Y+a_5XY^2\nonumber \\
&&+a_6X^2Y+a_7Y^2+a_8XY+X^2+a_9Y.\label{eqtriangle1}
\end{eqnarray}
If $a_1 \neq 1$ then 
\begin{equation}
\{X^4Y,X^5Y,X^6Y\} \subset \Box_{\prec_w}(F).\label{eqsnabelzwei}
\end{equation}
Continuing the calculations for this case we obtain
\begin{eqnarray}
&&Y((a_1+1)X^4Y+a_2X^2Y^2+a_3Y^3+a_4X^3Y+a_5XY^2+a_6X^2Y\nonumber \\
&&+a_7Y^2+a_8XY+X^2+a_9Y)
   \nonumber \\
&\overset{F(X,Y)}{\longrightarrow}&(a+1)(a_1X^7+a_2X^5Y+a_3X^3Y^2+a_4X^6+a_5X^4Y+a_6X^5\nonumber
  \\
&&+a_7X^3Y+a_8X^4+a_9X^3)+a_2X^2Y^3+a_3Y^4+a_4X^3Y^2+a_5XY^3\nonumber
  \\
&&+a_6X^2Y^2+a_7Y^3+a_8XY^2+X^2Y+a_9Y^2.\nonumber \end{eqnarray}
If $a_1\neq 0$ then we also have 
$$\{X^7\} \subset \Box_{\prec_w}(F).$$
Assuming next that $a_1=0$ the above expression becomes
\begin{eqnarray}
&&a_2X^5Y+a_3X^3Y^2+a_4X^6+a_5X^4Y+a_6X^5+a_7X^3Y+a_8X^4\nonumber \\
&&+a_9X^3+a_2X^2Y^3+a_3Y^4+a_4X^3Y^2\nonumber
  \\
&&+a_5XY^3+a_6X^2Y^2+a_7Y^3+a_8XY^2+X^2Y+a_9Y^2\nonumber \\
&\overset{Y^3+X^3Y+X}{\longrightarrow}&a_4X^6+a_6X^5+a_8X^4+a_9X^3+a_3XY+a_4X^3Y^2\nonumber
  \\
&&+a_6X^2Y^2+a_8XY^2+X^2Y+a_9Y^2+a_2X^3+a_5X^2+a_7X\nonumber \\
&\overset{F(X,Y)}{\longrightarrow}&a_4X^6+a_6X^5+a_8X^4+a_9X^3+a_3XY\nonumber \\
&&+(a_4X^2+a_6X+a_8)(a_2X^2Y+a_3Y^2+a_4X^3+a_5XY+a_6X^2\nonumber \\
&&+a_7Y+a_8X+a_9)+X^2Y+a_9Y^2+a_2X^3+a_5X^2+a_7X.\nonumber
\end{eqnarray}
If $a_4\neq 0$ then
$$\{X^6,X^7\} \subset \Box_{\prec_w}(F).$$
Hence, we assume $a_4=0$. From the above expression we see that if
next $a_6\neq 0$ then 
$$\{X^5,X^6,X^7\}\subset \Box_{\prec_w}(F).$$
Hence, assume $a_6=0$. Investigating again the above expression we now
see that for $a_8 \neq 0$ it holds that 
$$\{X^4,X^5,X^6,X^7\} \subset \Box_{\prec_w}.$$
Continuing from the same expression, but now under the assumption that
$a_8=0$ we see that
$$\{X^2Y,X^3Y\} \subset \Box_{\prec_w}(F).$$
In conclusion, for the case $a_1 \neq 1$ we have in addition to
(\ref{eqsnabeleins}) and (\ref{eqsnabelzwei}) established at least one
more element in $\Box_{\prec_w}(F)$. That is, in addition to
(\ref{eqsnabeleins}) we have at least $4$ elements in
$\Box_{\prec_w}(F)$. Furthermore, if $a_1\neq 1$ and $a_1\neq 0$ then
we have at least one more element in addition in this set.\\

In the following we assume $a_1=1$ and continue the calculations from~(\ref{eqtriangle1}) as follows
\begin{eqnarray}
&\overset{Y^3+X^3Y+X}{\longrightarrow}&a_2X^2Y^2+(a_3+a_4)X^3Y+a_5XY^2+a_6X^2Y\nonumber
  \\
&&+a_7Y^2+a_8XY+X^2+a_9Y+a_3X. \nonumber\\
&\overset{F(X,Y)}{\longrightarrow}&a_2X^5+(a_2^2+a_3+a_4)X^3Y+(a_2a_3+a_5)XY^2+a_2a_4X^4
                                    \nonumber \\
&&+(a_2a_5+a_6)X^2Y+a_7Y^2+a_2a_6X^3+(a_2a_7+a_8)XY\nonumber \\
&&+(a_2a_8+1)X^2+a_9Y+(a_2a_9+a_3)X.\nonumber
\end{eqnarray}
If $a_2 \neq 0$ then 
$$\{X^5,X^5Y,X^6,X^6Y,X^7\} \subset \Box_{\prec_w}(F).$$
Hence, assume $a_2=0$. But then if $a_3 \neq a_4$ we get
\begin{equation}
\{X^3Y,X^4Y,X^5Y,X^6Y\} \subset \Box_{\prec_w}(F).\label{eqelefant}
\end{equation}
Multiplying the above polynomial by $Y$ and continuing the reduction
we obtain:
\begin{eqnarray}
&&(a_3+a_4)X^3Y^2+a_5XY^3+a_6X^2Y^2+a_7Y^3\nonumber \\
&&a_8XY^2+X^2Y+a_9Y^2+a_3XY\nonumber \\
&\overset{F(X,Y)}{\longrightarrow}&
                                    ((a_3+a_4)X^2+a_5Y+a_6X+a_8)\nonumber
  \\
&&(X^4a_3Y^2+a_4X^3+a_5XY+a_6X^2+a_7Y+a_8X+a_9) \nonumber \\
&&+a_7Y^3+a_8XY^2+X^2Y+a_9Y^2+a_3XY\nonumber
\end{eqnarray}
implying that 
$$\{ X^6,X^7\} \subset \Box_{\prec_w}(F).$$
Hence, for the case $a_1=1$, $a_2=0$, $a_3 \neq a_4$ in addition
to~(\ref{eqsnabeleins}) we found $6$ more elements in $\Box_{\prec_w}(F)$. Namely,
the above $2$ and the $4$ in~(\ref{eqelefant}).

In the following we assume $a_3=a_4$. Continuing the reduction we
obtain
\begin{eqnarray}
&\overset{F(X,Y)}{\longrightarrow}&a_5X^4+a_6X^2Y+(a_4a_5+a_7)Y^2+a_4a_5X^3+(a_5^2+a_8)XY\nonumber\\
&&+(a_5a_6+1)X^2+(a_5a_7+a_9)Y+(a_5a_8+a_4)X+a_5a_9.\label{eqhererdet} 
\end{eqnarray}
If $a_5 \neq 0$ then
$$\{X^4,X^4Y,X^5,X^5Y,X^6,X^6Y,X^7\} \subset \Box_{\prec_w}(F).$$
Hence, we next assume $a_5 = 0$. If then $a_6 \neq 0$ we obtain
$$\{X^2Y,X^3Y,X^4Y,X^5Y,X^6Y\} \subset \Box_{\prec_w}(F),$$
and we therefore now assume $a_6=0$. We next multiply the considered
polynomial by $X$ and continue the reduction
\begin{eqnarray}
&&a_7XY^2+a_8X^2Y+X^3+a_9XY+a_4X^2\nonumber \\
&\overset{F(X,Y)}{\longrightarrow}&a_7X^4+a_8X^2Y+a_4a_7Y^2+(a_4a_7+1)X^3
                                   \nonumber \\
&&+a_9XY+a_4X^2+a_7^2Y+a_7a_8X+a_7a_9.\nonumber
\end{eqnarray}
If $a_7 \neq 0$ then 
$$\{Y^2,X^4,X^4Y,X^5,X^5Y,X^6,X^6Y,X^7\} \subset \Box_{\prec_w}(F).$$
Here -- although it has no implication for what we want to prove --
we used (\ref{eqhererdet}) to demonstrate that $Y^2$ is also in the
set. Hence, assume now that $a_7=0$. Then if $a_8 \neq 0$ we obtain
$$\{X^2Y,X^3Y,X^4Y,X^5Y,X^6Y\} \subset \Box_{\prec_w}(F).$$
Finally, if $a_8=0$ the leading monomial becomes $X^3$ and we
therefore have
$$\{X^3,X^3Y,X^4,X^4Y,X^5,X^5Y,X^6,X^6Y,X^7\} \subset
\Box_{\prec_w}(F).$$
In conclusion we have established the existence of at least $6+\min \{
4,5,6,7,5,8,5\}=10$ elements in $\Box_{\prec_w}(F)$, and therefore $w_H(\vec{c})
\geq 10$. Moreover, by inspection of the results in the present
section we see that 
$w_H(\vec{c}) \geq 6+5=11$
holds when $a_1\in \{0,1\}$.

\subsection{Leading monomial equal to $X^3Y$}
Consider $\vec{c}={\mbox{ev}}(F+I_8)$ where 
\begin{eqnarray}
F(X,Y)&=&X^3Y+a_1XY^2+a_2X^4+a_3X^2Y+a_4Y^2\nonumber \\
&&+a_5X^3+a_6XY+a_7X^2+a_8Y +a_9X+a_{10}.
\nonumber 
\end{eqnarray}
For sure
\begin{eqnarray}
&&\{ X^3Y,X^4Y,X^5Y,X^6Y,X^3Y^2,X^4Y^2,X^5Y^2,X^6Y^2\} \subset \Box_{\prec_w}(F).\nonumber
\end{eqnarray}
We next consider an exhaustive series of conditions under which we establish more
monomials in $\Box_{\prec_w}(F)$. The strategy in this subsection is
different from other sections in that we here do not reduce modulo
$F(X,Y)$ but instead in addition to reducing modulo $Y^3+X^3Y+X$ also reduce modulo the
polynomials  $X^8+X, X^7Y+Y \in \langle
Y^3+X^3Y+X,Y^8+Y,X^8+X\rangle$.\\

We start by multiplying $F(X,Y)$ by $X^7$ to obtain
\begin{eqnarray}
&&X^{10}Y+a_1X^8Y^2+a_2X^{11}+a_3X^9Y+a_4X^7Y^2+a_5X^{10}+a_6X^8Y\nonumber
   \\
&&+a_7X^9+a_8X^7Y+a_9X^8+a_{10}X^7\nonumber \\
&\overset{X^8+X}{\longrightarrow}&X^3Y+a_1XY^2+a_2X^4+a_3X^2Y+a_4X^7Y^2\nonumber
  \\
&&+a_5X^3+a_6XY+a_7X^2+a_8X^7Y+a_9X+a_{10}X^7\nonumber \\
&\overset{X^7Y+Y}{\longrightarrow}&X^3Y+a_1XY^2+a_2X^4+a_3X^2Y+a_4Y^2\nonumber
  \\
&&+a_5X^3+a_6XY+a_7X^2+a_8Y+a_5X+a_{10}X^7.\nonumber 
\end{eqnarray} 
If $a_{10} \neq 0$ then 
$$\{ X^7\} \subset \Box_{\prec_w}(F).$$
Hence, assume $a_{10}=0$ and multiply the resulting expression by
$Y^2$ to obtain
\begin{eqnarray}
&&X^3Y^3+a_1XY^4+a_2X^4Y^2+a_3X^2Y^3+a_4Y^4\nonumber \\
&&+a_5X^3Y^2+a_6XY^3+a_7X^2Y^2+a_8Y^3+a_9XY^2\nonumber \\
&\overset{Y^3+X^3Y+X}{\longrightarrow}&X^6Y+X^4+a_1X^4Y^2+a_1X^2Y+a_2X^4Y^2+a_3X^5Y\nonumber
  \\
&&+a_3X^3+a_4X^3Y^2+a_4XY+a_5X^3Y^2+a_6X^4Y+a_6X^2\nonumber \\
&&+a_7X^2Y^2+a_8X^3Y+a_8X+a_9XY^2\nonumber 
\end{eqnarray}
which we multiply by $X^6$ to obtain
\begin{eqnarray}
&&X^{12}Y+X^{10}+a_1X^{10}Y^2+a_1X^8Y+a_2X^{10}Y^2+a_3X^{11}Y+a_3X^9
\nonumber
  \\
&&+a_4X^9Y^2+a_4X^7Y
+a_5X^9Y^2+a_6X^{10}Y+a_6X^8+a_7X^8Y^2\nonumber \\
&&
+a_8X^9Y+a_8X^7+a_9X^7Y^2
\nonumber
  \\
&\overset{X^7Y+Y}{\longrightarrow}& \cdots \nonumber
  \\&\overset{X^8+X}{\longrightarrow}&
                                       X^5Y+X^3+a_1X^3Y^2+a_1XY+a_2X^3Y^2+a_3X^4Y+a_3X^2\nonumber
                                       \\
&&+a_4X^2Y^2+a_4Y+a_5X^2Y^2+a_6X^3Y+a_6X+a_7XY^2+a_8X^2Y\nonumber \\
&&+a_8X^7+a_9Y^2.\nonumber 
\end{eqnarray}
If $a_8\neq 0$ then 
$$\{X^7\} \subset \Box_{\prec_w}(F).$$
Hence, assume $a_8=0$. We next multiply $F(X,Y)$ by $Y$ and obtain
\begin{eqnarray}
&&X^3Y^2+a_1XY^3+a_2X^4Y+a_3X^2Y^2+a_4Y^3+a_5X^3Y\nonumber \\
&&+a_6XY^2+a_7X^2Y+a_9XY\nonumber
  \\
&\overset{Y^3+X^3Y+X}{\longrightarrow}&X^3Y^2+a_1X^4Y+a_1X^2+a_2X^4Y+a_3X^2Y^2\nonumber
  \\
&&+a_4X^3Y+a_4X+a_5X^3Y+a_6XY^2+a_7X^2Y+a_9XY\nonumber
\end{eqnarray}
which we multiply by $X^6$ to obtain
\begin{eqnarray}
&&X^9Y^2+a_1X^{10}Y+a_1X^8+a_2X^{10}Y+a_3X^8Y^2\nonumber \\
&&+a_4X^9Y+a_4X^7+a_5X^9Y+a_6X^7Y^2+a_7X^8Y+a_9X^7Y\nonumber \\
&\overset{X^7Y+Y}{\longrightarrow}&X^2Y^2+a_1X^3Y+a_1X+a_2X^3Y+a_3XY^2\nonumber
  \\
&&+a_4X^2Y+a_4X^7+a_5X^2Y+a_6Y^2+a_7XY+a_9Y.\nonumber 
\end{eqnarray}
If $a_4 \neq 0$ then
$$\{X^7\} \subset \Box_{\prec_w}(F).$$
Hence, assume $a_4=0$. We next multiply $F(X,Y)$ by $X^6$ to obtain
\begin{eqnarray}
&&X^9Y+a_1X^7Y^2+a_2X^{10}+a_3X^8Y+a_5X^9\nonumber \\
&&+a_6X^7Y+a_7X^8+a_9X^7
   \nonumber \\
&\overset{X^7Y+Y}{\longrightarrow}&\cdots \nonumber \\
&\overset{X^8+X}{\longrightarrow}&X^2Y+a_1Y^2+a_2X^3+a_3XY+a_5X^2+a_6Y+a_7X+a_9X^7. \nonumber 
\end{eqnarray}
If $a_9 \neq 0$ then 
$$\{X^7\} \subset \Box_{\prec_w}(F).$$
Hence, assume $a_9=0$ and multiply by $Y^2$
\begin{eqnarray}
&&X^2Y^3+a_1Y^4+a_2X^3Y^2+a_3XY^3+a_5X^2Y^2+a_6Y^3+a_7XY^2 \nonumber
  \\
&\overset{Y^3+X^3Y+X}{\longrightarrow}&X^5Y+X^3+a_1X^3Y^2+a_1XY+a_2X^3Y^2+a_3X^4Y\nonumber
  \\
&&+a_3X^2+a_5X^2Y^2+a_6X^3Y+a_6X+a_7XY^2 \nonumber
\end{eqnarray}
which we then multiply by $X^6$ to obtain
\begin{eqnarray}
&&X^{11}Y+X^9+a_1X^9Y^2+a_1X^7Y+a_2X^9Y^2+a_3X^{10}Y+a_3X^8\nonumber
  \\
&&+a_5X^8Y^2+a_6X^9Y+a_6X^7+a_7X^7Y^2\nonumber
  \\
&\overset{X^7Y+Y}{\longrightarrow}&\cdots \nonumber \\
&\overset{X^8+X}{\longrightarrow}&X^4Y+X^2+a_1X^2Y^2+a_1Y+a_2X^2Y^2+a_3X^3Y+a_3X\nonumber
                                   \\
&&+a_5XY^2+a_6X^2Y+a_6X^7+a_7Y^2.\nonumber 
\end{eqnarray}
If $a_6 \neq 0$ then 
$$\{X^7\} \subset \Box_{\prec_w}(F).$$
Hence, assume $a_6=0$. We next multiply $F(X,Y)$ by $Y$ and continue
the reductions:
\begin{eqnarray}
&&X^3Y^2+a_1XY^3+a_2X^4Y+a_3X^2Y^2+a_5X^3Y+a_7X^2Y\nonumber \\
&\overset{Y^3+X^3Y+X}{\longrightarrow}&X^3Y^2+a_1X^4Y+a_1X^2+a_2X^4Y+a_3X^2Y^2+a_5X^3Y+a_7X^2Y                                        \nonumber
\end{eqnarray}
which we multiply by $X^5$
\begin{eqnarray}
&&X^8Y^2+a_1X^9Y+a_1X^7+a_2X^9Y+a_3X^7Y^2+a_5X^8Y+a_7X^7Y\nonumber \\
&\overset{X^7Y+Y}{\longrightarrow}&XY^2+a_1X^2Y+a_1X^7+a_2X^2Y+a_3Y^2+a_5XY+a_7Y.\nonumber 
\end{eqnarray}
If $a_1 \neq 0$ then 
$$\{X^7\} \subset \Box_{\prec_w}(F).$$
Hence, assume $a_1=0$. We next multiply $F(X,Y)$ by $X^5$
\begin{eqnarray}
&&X^8Y+a_2X^9+a_3X^7Y+a_5X^8+a_7X^7 \nonumber \\
&\overset{X^7Y+Y}{\longrightarrow}& \cdots \nonumber \\
&\overset{X^8+X}{\longrightarrow}& XY+a_2X^2+a_3Y+a_5X+a_7X^7. \nonumber 
\end{eqnarray}
If $a_7 \neq 0$ then 
$$\{ X^7\} \subset \Box_{\prec_w}(F).$$
Hence, assume $a_7=0$. Next we multiply $F(X,Y)$ by $Y^2$ and obtain
\begin{eqnarray}
&&X^3Y^3+a_2X^4Y^2+a_3X^2Y^3+a_5X^3Y^2 \nonumber \\
&\overset{Y^3+X^3Y+X}{\longrightarrow}&X^6Y+X^4+a_2X^4Y^2+a_3X^5Y+a_3X^3+a_5X^3Y^2\nonumber 
\end{eqnarray}
which we multiply by $X^4$
\begin{eqnarray}
&&X^{10}Y+X^8+a_2X^8Y^2+a_3X^9Y+a_3X^7+a_5X^7Y^2\nonumber \\
&\overset{X^7Y+Y}{\longrightarrow}&\cdots  \nonumber \\
&\overset{X^8+X}{\longrightarrow}&X^3Y+X+a_2XY^2+a_3X^2Y+a_3X^7+a_5Y^2. \nonumber
\end{eqnarray}
If $a_3 \neq 0$ then 
$$\{X^7\} \subset \Box_{\prec_w}(F).$$
Hence, assume $a_3=0$. We now multiply $F(X,Y)$ by $X^4$
\begin{eqnarray}
&&X^7Y+a_2X^8+a_5X^7\nonumber \\
&\overset{X^7Y+Y}{\longrightarrow}&\cdots \nonumber \\
&\overset{X^8+x}{\longrightarrow}&Y+a_2X+a_5X^7. \nonumber 
\end{eqnarray}
If $a_5\neq 0$ then
$$\{X^7\} \subset \Box_{\prec_w}.$$
Hence, assume finally that $a_5 \neq 0$ and multiply $F(X,Y)$ by $Y^2$
to obtain
\begin{eqnarray}
&&X^3Y^3+a_2X^4Y^2\nonumber \\
&\overset{Y^3+X^3Y+X}{\longrightarrow}&X^6Y+X^4+a_2X^4Y^2.\nonumber
\end{eqnarray}
This expression is then multiplied by $X^3$
\begin{eqnarray}
&&X^9Y+X^7+a_2X^7Y^2\nonumber \\
&\overset{X^7Y+Y}{\longrightarrow}&X^2Y+X^7+a_2Y^2 \nonumber
\end{eqnarray}
and 
$$\{X^7\} \subset \Box_{\prec_w}\subset \Box_{\prec_w}(F).$$ 
In conclusion $w_H(\vec{c})\geq 8+1=9$.

\subsection{Leading monomial equal to $X^2Y^2$}
Consider $\vec{c}={\mbox{ev}}(F+I_8)$ where 
\begin{eqnarray}
F(X,Y)&=&X^2Y^2+a_1X^5+a_2X^3Y+a_3XY^2+a_4X^4+a_5X^2Y\nonumber \\
&&+a_6Y^2+a_7X^3+a_8XY+a_9X^2+a_{10}Y +a_{11}X+a_{12}.
\nonumber 
\end{eqnarray}
For sure
\begin{eqnarray}
&&\{ X^2Y^2,X^3Y^2,X^4Y^2,X^5Y^2,X^6Y^2\} \subset \Box_{\prec_w}(F).\nonumber
\end{eqnarray}
We next consider an exhaustive series of conditions under which we establish more
monomials in $\Box_{\prec_w}(F)$. We have
\begin{eqnarray}
&&YF(X,Y) \nonumber \\
&\overset{Y^3+X^3Y+X}{\longrightarrow}&(1+a_1)X^5Y+a_2X^3Y^2+a_3XY^3+a_4X^4Y+a_5X^2Y^2\nonumber
  \\
&&+a_6Y^3+a_7X^3Y+a_8XY^2+a_9X^2Y+a_{10}Y^2+X^3\nonumber \\
&&+a_{11}XY+a_{12}Y.\nonumber
\end{eqnarray}
If $a_1\neq 1$ then 
$$\{ X^5Y,X^6Y\} \subset \Box_{\prec_w}(F).$$
Hence, assume $a_1=1$ and continue the reduction.
\begin{eqnarray}
&\overset{Y^3+X^3Y+X}{\longrightarrow}&a_2X^3Y^2+(a_3+a_4)X^4Y+a_5X^2Y^2+(a_6+a_7)X^3Y+a_8XY^2\nonumber
  \\
&&+a_9X^2Y+a_{10}Y^2+X^3+a_{11}XY+a_3X^2+a_{12}Y+a_6X\nonumber \\
&\overset{F(X,Y)}{\longrightarrow}&a_2X^6+(a_3+a_4+a_2^2)X^4Y+(a_2a_3+a_5)X^2Y^2+a_2a_4X^5 \nonumber \\
&&+(a_2a_5+a_6+a_7)X^3Y+(a_2a_6+a_8)XY^2+a_2a_7X^4\nonumber \\
&&+(a_2a_8+a_9)X^2Y+a_{10}Y^2+(a_2a_9+1)X^3+(a_2a_{10}+a_{11})XY\nonumber
  \\
&&+(a_2a_{11}+a_3)X^2+a_{12}Y+(a_2a_{12}+a_6)X.\nonumber
\end{eqnarray}
If $a_2  \neq 0$ then
$$\{X^6,X^6Y,X^7\} \subset \Box_{\prec_w}(F).$$
Hence, assume $a_2=0$. If $a_3\neq a_4$ then we have 
$$\{X^4Y,X^5Y,X^6Y\} \subset \Box_{\prec_w}(F).$$
Assuming $a_3=a_4$ we continue the reduction as follows
\begin{eqnarray}
&\overset{F(X,Y)}{\longrightarrow}&a_5X^5+(a_6+a_7)X^3Y+(a_4a_5+a_8)XY^2+a_4a_5X^4+(a_5^2+a_9)X^2Y\nonumber  \\
&& +(a_5a_6+a_{10})Y^2+(a_5a_7+1)X^3+(a_5a_8+a_{11})XY+(a_5a_9+a_4)X^2\nonumber
   \\
&&+(a_5a_{10}+a_{12})Y
   +(a_5a_{11}+a_6)X+a_5a_{12}.\nonumber 
\end{eqnarray} 
If $a_5 \neq 0$ then 
$$\{X^5,X^5Y,X^6, X^6Y,X^7\} \subset \Box_{\prec_w}(F).$$
Hence, assume $a_5=0$. But then if $a_6 \neq a_7$
$$\{X^3Y,X^4Y,X^5Y,X^6Y\} \subset \Box_{\prec_w}(F),$$
and we therefore next assume $a_6=a_7$. We now multiply the above
polynomial by $X$ and continue the reduction
\begin{eqnarray}
&&X\big( a_8XY^2+a_9X^2Y+a_{10}Y^2+X^3+a_{11}XY+a_4X^2+a_{12}Y+a_6X\big)\nonumber \\
&\overset{F(X,Y)}{\longrightarrow}&a_8X^5+a_9X^3Y+(a_4a_8+a_{10})XY^2+(a_4a_8+1)X^4+a_{11}X^2Y\nonumber
  \\
&&+a_7a_8Y^2+(a_4+a_7a_8)X^3+(a_8^2+a_{12})XY+(a_6+a_8a_9)X^2\nonumber
   \\
&&+a_8a_{10}Y+a_8a_{11}X+a_8a_{12}.\nonumber
\end{eqnarray}
If $a_8 \neq 8$ then
$$\{X^5,X^5Y,X^6,X^6Y,X^7\} \subset \Box_{\prec_w}(F),$$
and if $a_8=0$ but $a_9 \neq 0$ then 
$$\{X^3Y,X^4Y,X^5Y,X^6Y\}\subset \Box_{\prec_w}(F).$$
Hence, assume $a_8=a_9=0$ and multiply the resulting polynomial by $X$
after which we continue the reduction.
\begin{eqnarray}
&&X \big(a_{10}XY^2+X^4+a_{11}X^2Y+a_4X^3+a_{12}XY+a_6X^2\big)
   \nonumber \\
&\overset{F(X,Y)}{\longrightarrow}&(a_{10}+1)X^5+a_{11}X^3Y+a_4a_{10}XY^2+(a_4+a_4a_{10})X^4+a_{12}X^2Y\nonumber
  \\
&&+a_7a_{10}Y^2+(a_7a_{10}+a_7)X^2+a_{10}^2Y+a_{10}a_{11}X+a_{10}a_{12}.\nonumber
\end{eqnarray}
If $a_{10} \neq 1$ then 
$$\{X^5,X^5Y,X^6,X^6Y,X^7\} \subset \Box_{\prec_w}(F).$$
Hence, assume $a_{10}=1$. If $a_{11}\neq 0$ then 
$$\{X^3Y,X^4Y,X^5Y,X^6Y\} \subset \Box_{\prec_w}(F).$$
Hence assume $a_{11}=0$ and multiply the resulting polynomial by $X$
and continue the reduction
\begin{eqnarray}
&&X\big( a_4XY^2+a_{12}X^2Y+a_7Y^2+Y+a_{12} \big)\nonumber \\
&\overset{F(X,Y)}{\longrightarrow}&a_4X^5+a_{12}X^3Y+(a_4^2+a_7)XY^2+a_4^2X^4+a_4a_7Y^2\nonumber
  \\
&&+a_4a_7X^3+XY+a_4Y+a_{12}X+a_4a_{12}.\nonumber
\end{eqnarray}
If $a_4 \neq 0$ then
$$\{X^5,X^5Y,X^6,X^6Y,X^7\} \subset \Box_{\prec_w}(F).$$
Hence, assume $a_4=0$ Then if $a_{12} \neq 0$ we have
$$\{X^3Y,X^4Y,X^5Y,X^6Y\} \subset \Box_{\prec_w}(F).$$
Hence, we assume $a_{12}=0$. We again multiply by $X$ and continue the
reduction
\begin{eqnarray}
&X \big( a_7XY^2+XY\big)\overset{F(X,Y)}{\longrightarrow} a_7X^5+a_7^2Y^2+a_7^2X^3+a_7Y+X^2Y.\nonumber
\end{eqnarray}
If $a_7 \neq 0$ then 
$$\{X^5,X^5Y,X^6,X^6Y,X^7\} \subset \Box_{\prec_w}(F).$$
Finally, assume $a_7=0$. But then we are left with $X^2Y$ and
therefore
$$\{X^2Y,X^3Y,X^4Y,X^5Y,X^6Y\} \subset \Box_{\prec_w}(F).$$
In conclusion we can always establish at least $5+\min \{ 2, 3,3,5,4,5,4,5,4,5,4\} = 7$ monomials in
$\Box_{\prec_w}(F)$, and we conclude that 
$w_H(\vec{c})\geq 7$. 
Moreover, our analysis reveals that if $a_1=1$ then $w_H(\vec{c}) \geq 5+3=8$.

\subsection{Leading monomial equal to $X^3Y^2$}
Consider $\vec{c}={\mbox{ev}}(F+I_8)$ where 
\begin{eqnarray}
F(X,Y)&=&X^3Y^2+a_1X^6+a_2X^4Y+a_3X^2Y^2+a_4X^5\nonumber \\
&&+a_5X^3Y+a_6XY^2+a_7X^4+a_8X^2Y+a_9Y^2\nonumber \\
&&+a_{10}X^3+a_{11}XY+a_{12}X^2+a_{13}Y+a_{14}X+a_{15}.\nonumber
\end{eqnarray}
For sure
$$
\{ X^3Y^2,X^4Y^2,X^5Y^2,X^6Y^2\} \subset \Box_{\prec_w}(F). 
$$
We next consider an exhaustive series of conditions under which we establish more
monomials in $\Box_{\prec_w}(F)$. We have
\begin{eqnarray}
&&YF(X,Y) \nonumber \\
&\overset{Y^3+X^3Y+X}{\longrightarrow}&(1+a_1)X^6Y+a_2X^4Y^2+a_3X^2Y^3+a_4X^5Y+a_5X^3Y^2\nonumber
  \\
&&+a_6XY^3+a_7X^4Y+a_8X^2Y^2+a_9Y^3+a_{10}X^3Y+a_{11}XY^2\nonumber \\
&&+X^4+a_{12}X^2Y +a_{13}Y^2+a_{14}XY+a_{15}Y. \nonumber
\end{eqnarray}
If $a_1\neq 1$ then $$\{X^6Y\} \subset \Box_{\prec_w}(F).$$
Hence, assume $a_1=1$ and continue the reduction.
\begin{eqnarray}
&\overset{Y^3+X^3Y+X}{\longrightarrow}&a_2X^4Y^2+(a_3+a_4)X^5Y+a_5X^3Y^2+(a_6+a_7)X^4Y\nonumber
  \\
&&+a_8X^2Y^2+(a_9+a_{10})X^3Y+a_{11}XY^2+X^4+a_{12}X^2Y\nonumber \\
&&+a_{13}Y^2+a_3X^3+a_{14}XY+a_6X^2+a_{15}Y+a_9X\nonumber \\
&\overset{F(X,Y)}{\longrightarrow}&a_2X^7+(a_2^2+a_3+a_4)X^5Y+(a_2a_3+a_5)X^3Y^2+a_2a_4X^6\nonumber
  \\
&&+(a_2a_5+a_6+a_7)X^4Y+(a_2a_6+a_8)X^2Y^2+a_2a_7X^5 \nonumber \\
&&+(a_2a_8+a_9+a_{10})X^3Y+(a_2a_9+a_{11})XY^2 +(a_2a_{10}+1)X^4 \nonumber \\
&&+(a_2a_{11}+a_{12})X^2Y+a_{13}Y^2+(a_2a_{12}+a_3)X^3+(a_2a_{13}+a_{14})XY\nonumber
  \\
&&+(a_2a_{14}+a_6)X^2+a_{15}Y+(a_2a_{15}+a_9)X.\nonumber
\end{eqnarray}
If $a_2 \neq 0$ then 
$$\{X^7\} \subset \Box_{\prec_w}(F).$$
Hence, assume $a_2=0$. If $a_3 \neq a_4$ then
$$\{X^5Y,X^6Y\} \subset \Box_{\prec_w}(F).$$
Hence, assume $a_3=a_4$ and continue the reduction. 
\begin{eqnarray}
&\overset{F(X,Y)}{\longrightarrow}&a_5X^6+(a_6+a_7)X^4Y+(a_4a_5+a_8)X^2Y^2+a_4a_5X^5\nonumber
  \\
&&+(a_5^2+a_9+a_{10})X^3Y+(a_5a_6+a_{11})XY^2+(a_5a_7+1)X^4\nonumber
  \\
&&+(a_5a_8+a_{12})X^2Y+(a_5a_9+a_4)Y^2+(a_5a_{10}+a_4)X^3   \nonumber \\
&&+(a_5a_{11}+a_{14})XY+(a_5a_{12}+a_6)X^2
+(a_5a_{13}+a_{15})Y\nonumber \\
&&+(a_5a_{14}+a_9)X+a_5a_{15}. \nonumber
\end{eqnarray}
If $a_5 \neq 0$ then
$$\{X^6,X^6Y,X^7\} \subset \Box_{\prec_w}(F).$$
Hence, assume $a_5=0$. But then if $a_6\neq a_7$ 
$$\{X^4Y,X^5Y,X^6Y\} \subset \Box_{\prec_w}(F).$$
Hence, assume $a_6=a_7$. But then if $a_8 \neq 0$ we obtain
$$\{X^2Y^2 \} \subset \Box_{\prec_w}(F).$$
Actually, this result could be improved to
$$\{X^2Y^2,X^6,X^6Y,X^7\} \subset \Box_{\prec_w}(F)$$
if we multiply the above polynomial by $X$ and reduce it modulo
$F(X,Y)$. The details  are left for the reader. Next assume $a_8=0$. But then if $a_9 \neq a_{10}$ we get
$$\{X^3Y,X^4Y,X^5Y,X^6Y\} \subset \Box_{\prec_w}(F).$$
Hence, assume $a_9=a_{10}$. If $a_{11} \neq 0$ then
$$\{XY^2,X^2Y^2\}\subset \Box_{\prec_w}(F).$$
Finally, assume $a_{11}=0$. But then $X^4$ is the leading monomial and
we obtain
$$\{X^4,X^4Y,X^5,X^5Y,X^6,X^6Y,X^7\} \subset \Box_{\prec_w}(F).$$
In conclusion we can always establish at least $4+\min \{ 1, 1,2,3,3,4,4,2,7\} = 5$ monomials in
$\Box_{\prec_w}(F)$, and we conclude that 
$w_H(\vec{c})\geq 5$. Moreover, if $a_1 = 1$ and $a_2 = 0$ then
$w_H(\vec{c}) \geq 4+2=6$.

\subsection{Leading monomial equal to $X^7$}\label{subs}
Consider $\vec{c}={\mbox{ev}}(F+I_8)$ where 
\begin{eqnarray}
F(X,Y)&=&X^7+a_1X^5Y+a_2X^3Y^2+a_3X^6+a_4X^4Y+a_5X^2Y^2\nonumber \\
&&+a_6X^5+a_7X^3Y+a_8XY^2+a_9X^4+a_{10}X^2Y+a_{11}Y^2\nonumber \\
&&+a_{12}X^3+a_{13}XY+a_{14}X^2+a_{15}Y+a_{16}X+a_{17}. \nonumber
\end{eqnarray}
Observe that among the $22$ affine roots over ${\mathbb{F}}_8$ of $Y^3+X^3Y+X$ the
only point having the first coordinate equal to $0$ is $(0,0)$. Hence,
${\mbox{ev}}(X^7+1)$ is of Hamming weight $1$ meaning that  $w_H(\vec{c})=1$ when $a_1=\cdots = a_{16}=0$
and $a_{17}=1$. In the following we show that for all other choices of
$a_i$ the Hamming weight becomes  at least $3$. We first observe, that 
$$\{X^7\} \subset \Box_{\prec_w}(F).$$
Now consider
\begin{eqnarray}
&&YF(X,Y) \nonumber \\
&\overset{X^7Y+Y}{\longrightarrow}&a_1X^5Y^2+a_2X^3Y^3+a_3X^6Y+a_4X^4Y^2+a_5X^2Y^3
                                    \nonumber \\
&&+a_6X^5Y+a_7X^3Y^2+a_8XY^3+a_9X^4Y+a_{10}X^2Y^2\nonumber \\
&&
+a_{11}Y^3+a_{12}X^3Y+a_{13}XY^2+a_{14}X^2Y\nonumber \\
&&+a_{15}Y^2+a_{16}XY+(a_{17}+1)Y\nonumber \\
&\overset{Y^3+X^3Y+X}{\longrightarrow}& a_1
                                        X^5Y^2+(a_2+a_3)X^6Y+a_4X^4Y^2+(a_5+a_6)X^5Y
                                        \nonumber \\
&&+a_7X^3Y^2+(a_8+a_9)X^4Y+a_{10}X^2Y^2+(a_{11}+a_{12})X^3Y
   \nonumber \\
&&+a_{13}XY^2+a_2X^4+a_{14}X^2Y+a_{15}Y^2+a_5X^3+a_{16}XY\nonumber \\
&&+a_8X^2+(a_{17}+1)Y+a_{11}X.\nonumber
\end{eqnarray}
If the above polynomial is non-zero then going through all possible
leading monomials we see that we can always establish at least two
more monomials in $\Box_{\prec_w}(F)$ in addition to $X^7$. For instance if $a_1 \neq 0$ then
we can add $\{X^5Y^2,X^6Y^2\}$. If $a_1=0$ and $a_2 \neq a_3$ then we
can add $\{X^6Y,X^6Y^2\}$ and so on. By inspection the above
polynomial equals the zero polynomial if and only if $F(X,Y)=X^7+1$
and we are through. 
\subsection{The remaining cases}
For the remaining choices of leading monomial it seems impossible to
obtain better information on 
$\Box_{\prec_w}(F )$ than what is derived by noting that all
monomials divisible by ${\mbox{lm}}(F)$ must be a leading monomial in $\langle F
\rangle +I_8$. In particular when the leading monomial is $X^i$, $i=0, \ldots ,
7$ the information we obtain in this way can be shown to be the true Hamming weight
of existing corresponding codewords. In conclusion we established the
information in Figure~\ref{figto}.

\begin{figure}
\begin{center}
$$
\begin{array}{cccccccc}
{\mbox{ }}13{\mbox{ }}&{\mbox{ }}10{\mbox{ }}&{\mbox{ }}7{\mbox{ }}&{\mbox{ }}5{\mbox{ }}&{\mbox{ }}3{\mbox{ }}&{\mbox{ }}2{\mbox{ }}&{\mbox{ }}1{\mbox{ }}
\\
18&15&12&9&6&4&2\\
22&19&16&13&10&7&4&{\mbox{ }}1
\end{array}
$$
\end{center}
\caption{Lower bounds on $\# \Box_{\prec_w}(F)$ where ${\mbox{lm}}(F)$
  are as in Figure~\ref{figfirstone}}
\label{figto}
\end{figure}
 
\section{Code parameters}\label{seccodes}

As code construction we use
\begin{eqnarray}
{\mbox{Span}}_{{\mathbb{F}}_8} \{ {\mbox{ev}}(M+I_8)
                \mid M \in \Delta_{\prec_w}(I_8), \delta(M) \geq s\},
                \nonumber 
\end{eqnarray} 
where $\delta(M)$ are the estimates of $\# \Box_{\prec_w}(F)$ as
depicted in Figure~\ref{figto}. In this way we obtain the best
possible codes, according to our estimates. The resulting parameters
are shown in 
Table~\ref{tab3}.
\begin{table}
\begin{center}
\begin{tabular}{ccc}
$[22,1,22]_8$&$[22,2,19]_8$&$[22,3,18]_8$\\
$[22,4,16]_8$&$[22,5,15]_8$&$[22,7,13]_8$\\
$[22,8,12]_8$&$[22,10,10]_8$&$[22,11,9]_8$\\
$[22,13,7]_8$&$[22,14,6]_8$&$[22,15,5]_8$\\
$[22,17,4]_8$&$[22,18,3]_8$&$[22,20,2]_8$
\end{tabular}
\end{center}
\caption{Parameters $[n,k,d]_8$ of codes from the Klein
  quartic. Here, $n$ and $k$ are sharp values, whereas $d$ represents a
  lower bound estimate.}
\label{tab3}
\end{table}
In almost all cases, given a dimension in the table, then the 
corresponding estimate on the minimum distance equals the best 
value known to exist according to~\cite{grassl}. The only
exceptions are the dimensions $4, 14, 15$ and $18$ where the best
minimum distances known to exist are one more than we obtain. We finally remark that
if we evaluate in all polynomials except those who have $X^6Y^2$ in their
support then by Subsection~\ref{subs} we get a code of dimension $21$ with exactly $7$
codewords of Hamming weight $1$. Hence, this code is almost as good as
the $[22,21,2]]_8$ code, known
to exist by~\cite{grassl}.
\section{Concluding remarks}
In~\cite[Ex.\ 3.2]{KFR} the authors estimated the minimum
distances of the duals of the codes studied in the present paper  using the Feng-Rao bound for dual codes. We
believe that is should be possible to improve (possibly even drastic) upon their estimates of the
minimum distance in the same way as we in this paper improved upon the
Feng-Rao bound for primary codes. We leave this question for future
research. The method of the present paper also applies to
estimate higher weights (possible relative). We leave it for future
research to establish examples where this gives improved information
compared to what can be derived from the Feng-Rao bound. In the light
of Remark~\ref{remrom} and the information established in
Section~\ref{secnew}, evidently our new method sometimes 
significantly improves upon the previous known methods. We stress that our
method is very general in that it can be applied to any primary affine
variety code. In particular it works for any monomial ordering and
consequently also without any of the order domain conditions
(Remark~\ref{remrom}). Finding more families of good affine variety
codes using our method is subject to future work. 
\section*{Acknowledgments}
The authors gratefully acknowledge the support from The Danish Council for Independent Research (Grant
No.\ DFF--4002-00367). They are also grateful to Department of
Mathematical Sciences, Aalborg University for supporting a one-month
visiting professor position for the second listed author. The research of Ferruh \"{O}zbudak has been funded by METU Coordinatorship
of Scientific Research Projects via grant for projects BAP-01-01-2016-008 and
BAP-07-05-2017-007.

\def\cprime{$'$}


\begin{thebibliography}{10}

\bibitem{clo4}
D.~A. Cox, J.~Little, and D.~O'Shea.
\newblock {\em Ideals, Varieties, and Algorithms: An Introduction to
  Computational Algebraic Geometry and Commutative Algebra}, volume Fourth
  Edition.
\newblock Springer, 2015.

\bibitem{FR24}
G.~L. Feng and T.~R.~N. Rao.
\newblock Decoding algebraic-geometric codes up to the designed minimum
  distance.
\newblock {\em IEEE Trans. Inform. Theory}, 39(1):37--45, 1993.

\bibitem{FR1}
G.~L. Feng and T.~R.~N. Rao.
\newblock A simple approach for construction of algebraic-geometric codes from
  affine plane curves.
\newblock {\em IEEE Trans. Inform. Theory}, 40(4):1003--1012, 1994.

\bibitem{FR2}
G.~L. Feng and T.~R.~N. Rao.
\newblock Improved geometric {G}oppa codes part {I}: Basic theory.
\newblock {\em IEEE Trans. Inform. Theory}, 41(6):1678--1693, 1995.

\bibitem{lax}
J.~Fitzgerald and R.~F. Lax.
\newblock Decoding affine variety codes using {G}r\"obner bases.
\newblock {\em Des. Codes Cryptogr.}, 13(2):147--158, 1998.

\bibitem{bookAG}
O.~Geil.
\newblock Evaluation codes from an affine variety code perspective.
\newblock In Edgar Mart\'inez-Moro, Carlos Munuera, and Diego Ruano, editors,
  {\em Advances in algebraic geometry codes}, volume~5 of {\em Coding Theory
  and Cryptology}, pages 153--180. World Scientific, Singapore, 2008.

\bibitem{geilhoeholdt}
O.~Geil and T.~H{\o}holdt.
\newblock Footprints or generalized {B}ezout's theorem.
\newblock {\em IEEE Trans. Inform. Theory}, 46(2):635--641, 2000.

\bibitem{geil2013improvement}
O.~Geil and S.~Martin.
\newblock An improvement of the {F}eng-{R}ao bound for primary codes.
\newblock {\em Designs, Codes and Cryptography}, 76(1):49--79, 2015.

\bibitem{grassl}
Markus Grassl.
\newblock {Bounds on the minimum distance of linear codes and quantum codes}.
\newblock Online available at \url{http://www.codetables.de}, 2007.
\newblock Accessed on 2017-04-20.

\bibitem{onorin}
T.~H{\o}holdt.
\newblock On (or in) {D}ick {B}lahut's' footprint'.
\newblock {\em Codes, Curves and Signals}, pages 3--9, 1998.

\bibitem{handbook}
T.~H{\o}holdt, J.~H. van Lint, and R.~Pellikaan.
\newblock Algebraic geometry codes.
\newblock In V.~S. Pless and W.~C. Huffman, editors, {\em Handbook of Coding
  Theory}, volume~1, pages 871--961. Elsevier, Amsterdam, 1998.

\bibitem{KFR}
M.~S. Kolluru, G.~L. Feng, and T.~R.~N. Rao.
\newblock Construction of improved geometric {G}oppa codes from {K}lein curves
  and {K}lein-like curves.
\newblock {\em Appl. Algebra Engrg. Comm. Comput.}, 10(6):433--464, 2000.

\bibitem{salazar}
G.~Salazar, D.~Dunn, and S.~B. Graham.
\newblock An improvement of the {F}eng-{R}ao bound on minimum distance.
\newblock {\em Finite Fields Appl.}, 12:313--335, 2006.

\end{thebibliography}
\end{document}